\theoremstyle{plain} 
\newtheorem{theorem}{Theorem}
\theoremstyle{definition}
\theoremstyle{remark}
\newcommand{\prob}{\mathsf{P}} 
\newcommand{\E}{\mathsf{E}}
\newcommand{\V}{\mathsf{V}}
\newcommand{\nm}{{\sf N}}
\newcommand{\RR}{\mathbb{R}}
\renewcommand{\phi}{\varphi} 
\newcommand{\eps}{\varepsilon}
\newcommand{\Jqv}{\langle J \rangle} 
\newcommand{\Jqvhat}{\widehat{\Jqv}_n}
\newcommand{\Jind}{\{J\}}
\title{Efficient posterior inference on the volatility of a jump diffusion process}
\author{Ryan Martin \\
Department of Statistics \\
North Carolina State University \\
{\tt rgmarti3@ncsu.edu} \\
\mbox{} \\
Cheng Ouyang  \quad and \quad Francois Domagni \\
Department of Mathematics, Statistics, and Computer Science \\
University of Illinois at Chicago \\
{\tt (couyang, fdomag2)@uic.edu} 
}
\date{\today}
\begin{document}

\maketitle

\begin{abstract}
Jump diffusion processes are widely used to model asset prices over time, mainly for their ability to capture complex discontinuous behavior, but inference on the model parameters remains a challenge.  Here our goal is posterior inference on the volatility coefficient of the diffusion part of the process based on discrete samples.  A Bayesian approach requires specification of a model for the jump part of the process, prior distributions for the corresponding parameters, and computation of the joint posterior.  Since the volatility coefficient is our only interest, it would be desirable to avoid the modeling and computational costs associated with the jump part of the process.  Towards this, we consider a {\em purposely misspecified model} that ignores the jump part entirely.  We work out precisely the asymptotic behavior of the Bayesian posterior under the misspecified model, propose some simple modifications to correct for the effects of misspecification, and demonstrate that our modified posterior inference on the volatility is efficient in the sense that its asymptotic variance equals the no-jumps model Cram\'er--Rao bound.  

\smallskip

\emph{Keywords and phrases:} Bernstein--von Mises theorem; credible interval; Gibbs posterior; model misspecification; uncertainty quantification. 
\end{abstract}

\section{Introduction}
\label{S:intro}

Jump diffusion models have gained considerable attention in the last two decades, especially in finance and economics, where they are used to model asset prices as a function of time.  An advantage of these models, over the classical Black--Scholes models \citep[e.g.][]{musiela.rutkowski.2005}, based solely on a continuous Brownian motion, is their ability to accommodate the rapid---seemingly discontinuous---changes in asset prices often seen in applications.  In fact, several authors have concluded that neither a purely-continuous nor purely-jump model is sufficient for real applications \citep[e.g.,][]{sahalia.jacod.2009, sahalia.jacod.2010, bn.shephard.2006, podolskij2006}. More specifically, by comparing the observed behavior of at-the-money and out-of-the-money call option prices near expiration with their analogous theoretical behavior, \citet{carr.wu.2003} and \citet{medvedev.scaillet.2007} argued that both continuous and jump components are necessary to explain the implied volatility behavior of S\&P500 index options.  In this paper we consider a continuous-time process $X=(X_t: t \in [0,T])$ over a fixed and finite time horizon $[0,T]$ that can be decomposed as 
\begin{equation}
\label{eq:levy}
X_t = \beta t + \theta^{1/2} W_t + J_t, \quad t \in [0,T], 
\end{equation}
where $\beta t + \theta^{1/2} W_t$ is a continuous diffusion---with $\beta$ the drift coefficient, $\theta$ the volatility coefficient, and $(W_t: t \in [0,T])$ a standard Brownian motion---and $J=(J_t: t \in [0,T])$ is a pure jump process with finite activity and independent of $W$. We emphasize here that we only assume that, with probability~1, the jump process $J$ has a finite number of jumps in $[0,T]$ and that each jump is finite.  For example, the results herein cover the case where $J$ is a compound Poisson process.  The quantity of interest here is the volatility coefficient $\theta$, a fundamentally important measure of uncertainty or risk \citep{musiela.rutkowski.2005}.  Our goal is to construct a (Bayesian) posterior distribution or, more generally, a data-dependent measure, $\Pi_n$ on $\RR^+ := (0,\infty)$ that can be used to provide valid uncertainty quantification about the volatility coefficient. 

If the entire process $X$ were observable, then we could immediately identify the jumps and, by subtraction, this could be converted to a standard problem.  However, here, as is typically the case in practice, the process $X$ is not fully observable; in particular, we can only observe $X$ at $n$ fixed times $0 < t_1 < t_2 < \cdots < t_n < T$, like in, e.g., \citet{sahalia.jacod.2009} and \citet{figueroa.lopez.2009}.  Having only discrete-time observations means that the continuous and jump parts cannot be separated with certainty, which forces us to deal with the jump part of the process in some way, even though our interest is only in the volatility of the continuous part.   A Bayesian approach would proceed by modeling all unknowns, using Bayes's theorem to get a posterior distribution for all the unknowns, and then evaluating the marginal posterior for inference on $\theta$.   While this approach is straightforward in principle, there are a number of challenges faced in practice.  For models more complex than that in \eqref{eq:levy}, where the volatility depends on the sample path itself, sophisticated computational tools and/or approximation methods are needed to evaluate the likelihood function and sample from the full posterior \citep[e.g.,][]{goncalves.roberts.2014, johannes.polson.2009, beskos.papa.roberts.2006, sahalia2002, sahalia2006, golightly2009, casella.roberts.2011}.  Even in the relatively simple model \eqref{eq:levy}, being obligated to model the jump process $J$ has some undesirable consequences, especially since we are only interested in the volatility coefficient $\theta$.  Indeed, developing a sound model for $J$, and specifying reasonable priors for the corresponding model parameters, is a non-trivial task---how large and how frequent are the jumps?~is the jump size and rate constant in time?~etc---and the quality of marginal inference on $\theta$ depends critically on the quality of this posited model, which is unverifiable.  \citet{rifo.torres.2009}, for example, in a setting similar to ours in \eqref{eq:levy}, propose a Bayesian model that assumes $J$ is a Poisson process, which certainly would not be appropriate for all applications.  To avoid potential bias from model misspecification, one could go semi-parametric, e.g., characterize $J$ by its L\'evy measure/density and put a prior on that, but this severely complicates the posterior computation and, furthermore, the addition of an infinite-dimensional nuisance parameter may affect the efficiency of the marginal inference on $\theta$.  Frequentist approaches \citep[e.g.,][]{sahalia.jacod.book} are available to estimate $\theta$ without specifying a model for $J$, so it would be desirable to have a Bayesian(-like) counterpart that provides a full posterior distribution for uncertainty quantification.  


Towards this Bayesian counterpart, we consider, in Section~\ref{S:model}, a purposely misspecified model that completely ignores the jumps, basically treating the observations as if they arise from a simple diffusion model.  This misspecified model is highly regular and computationally convenient, so if not heavily influenced by misspecification, then perhaps it would suffice for valid inference on $\theta$.  A special case of our Theorem~\ref{thm:bvm} says that the misspecified posterior is asymptotically normal but the misspecification has some undesirable effects, namely, the center is off-target and the spread is too large.  Rather than abandon the misspecified model, we propose, in Section~\ref{S:correcting}, to correct for the effects of misspecification, by making two simple adjustments: a suitable scaling of the log-likelihood to correct the spread, and a location shift.  Both of these adjustments rely on us having a suitable estimator of the quadratic variation of the jump process $J$.  We then show, in Theorem~\ref{thm:fixed}, that the corresponding modified posterior is asymptotically normal, centered around a consistent estimator of the true volatility, with variance equal to the Cram\'er--Rao lower bound for optimal/ideal case when there are no jumps, i.e., when the misspecified model is correct.  As a consequence, no proper Bayesian approach---with a parametric or nonparametric model for $J$---can do better asymptotically than our proposal.   Our particular modification is easy to implement and we present some simulation results in Section~\ref{S:numerical} to illustrate the validity of our modified posterior credible intervals.    

``Misspecification on purpose'' is a general idea which is both practically useful and theoretically interesting, with applications beyond the constant-volatility, jump diffusion setup considered here.  Our choice to demonstrate the benefits of this general idea in a relatively simple setting is only for the sake of clarity and conciseness.  Similar analysis applies in more complex situations but, naturally, the details (work in progress) are more involved and would potentially distract from the general idea.



\section{A misspecified model}
\label{S:model}

Assume that we observe the continuous-time process $(X_t)$ at $n$ distinct time points $t_1 < \cdots < t_n$, i.e., our observations are $X_{t_1}, \ldots, X_{t_n}$; for notational convenience later on, set $t_0 = 0$ and $X_0 \equiv 0$.  For notational simplicity, we will assume that the time points are equally spaced, so that each time difference $t_i - t_{i-1}$ equals $\Delta_n = T n^{-1}$; the case of non-equally spaced sampling can be handled similarly.  To avoid dealing directly with the jump component of the model \eqref{eq:levy}, we consider a purposely misspecified model that ignores both the drift\footnote{Ignoring the drift part here is only for simplicity; if the drift is also of interest, it is straightforward to carry out the subsequent analysis on a joint $(\beta,\theta)$ posterior.} and the jump part, i.e., it assumes that the differences $D_i = X_{t_i}-X_{t_{i-1}}$, $i=1,\ldots,n$, are iid $\nm(0, \theta \Delta_n)$ for some $\theta > 0$.  This misspecified model is easy to work with and has no nuisance parameters so, if it---or a simple modification thereof---also provides valid inference on the volatility, then it ought to be useful.  The likelihood function for this misspecified model, up to proportionality constants, is given by 
\begin{equation}
\label{eq:lik}
L_n(\theta) = \theta^{-n/2} \exp\Bigl\{-\frac{1}{2\Delta_n \theta} \sum_{i=1}^n D_i^2 \Bigr\} = \theta^{-n/2} \exp\Bigl\{ -\frac{n}{2} \frac{\hat\theta_n}{\theta} \Bigr\}, 
\end{equation}
where 
\[ \hat\theta_n = (n\Delta_n)^{-1} \sum_{i=1}^n D_i^2 = T^{-1} \sum_{i=1}^n D_i^2, \]
is the maximum likelihood estimator.  Just like in the familiar Bayes approach, we introduce a prior distribution $\Pi$ for $\theta$, with density function $\pi$.  Here we consider a generalization of the Bayesian setup, defining the (pseudo-)posterior distribution as
\begin{equation}
\label{eq:post}
\Pi_n(A) = \frac{\int_A L_n(\theta)^{1/\kappa_n} \pi(\theta) \,d\theta}{\int L_n(\theta)^{1/\kappa_n}\pi(\theta) \,d\theta}, \quad A \subseteq \RR^+, 
\end{equation}
where $\kappa_n$ is a suitable (possibly stochastic) sequence to be specified.  The distribution $\Pi_n$ in \eqref{eq:post} is sometimes referred to as a ``Gibbs posterior'' \citep[e.g.,][]{zhang2006a, zhang2006b, jiang.tanner.2008, grunwald.ommen.2014, syring.martin.scaling, bissiri.holmes.walker.2016} and $\kappa_n$ is a ``temperature'' parameter; the case $\kappa_n \equiv 1$ corresponds to the usual Bayes posterior.  Unlike in the well-specified Bayesian setting, where posterior consistency is typical, our model being misspecified means that we cannot expect $\Pi_n$ to converge to a point mass at the true volatility coefficient.  Therefore, some correction will be needed to point our posterior towards the true volatility coefficient, but first we need to understand how $\Pi_n$ in \eqref{eq:post} behaves without any intervention on our part.  

A fundamental result in Bayesian asymptotics is the Bernstein--von Mises theorem, which states that, under certain regularity conditions, a suitably centered and scaled version of the posterior will resemble a normal distribution, in the sense that the total variation distance between that normalized posterior and the normal distribution converges to zero in probability.  This classical version is typically used in the case of a well-specified model, but recently there has been work on a version of the Bernstein--von Mises theorem for misspecified models.  In particular, \citet{kleijn.vaart.2012}, in their Theorem~2.1, give a Bernstein--von Mises theorem when the model is misspecified.  Our result that follows is based on their approach. 

Before stating the result, we need to introduce some notation.  Let $\prob^\star$ denote the distribution of the differences $(D_1,\ldots,D_n)$, with $D_i = X_{t_i} - X_{t_{i-1}}$, under the jump diffusion model, and $\prob_J^\star$ the corresponding conditional distribution, given the jump part $J$ of the process \eqref{eq:levy}.  Also, let $\beta^\star$ and $\theta^\star$ denote the true drift and volatility coefficients, and define the expectation, conditional expectation, variance, and conditional variance as $\E^\star$, $\E_J^\star$, $\V^\star$, and $\V_J^\star$, respectively.  We consider a ``high-frequency'' scenario \citep[e.g.,][]{sahalia.jacod.book}, so $n$ is large and it is safe to assume that, {{with probability~1}}, the time windows $[t_{i-1}, t_i)$ contain at most one jump.  Therefore, for almost all $J$, under $\prob_J^\star$, we have that $(D_1,\ldots,D_n)$ are independent, $D_i \sim \nm(\beta^\star \Delta_n + \mu_i, \, \theta^\star \Delta_n)$, where 
\begin{equation}
\label{eq:mean}
\mu_i = J_{t_i} - J_{t_{i-1}}, \quad i=1,\ldots,n.
\end{equation}
For a given $J$, let $\Jqv = \sum_{i=1}^n \mu_i^2$ denote the quadratic variation of the jump process $J$, and let $\Jind$ denote the set of indices $i$ such that the window $[t_{i-1}, t_i)$ contains a jump, i.e., $\mu_i \neq 0$ if and only if $i \in \Jind$.  We assume that the process \eqref{eq:levy} has finite jump activity, so $|\Jind| \vee \Jqv < \infty$ with $\prob^\star$-probability~1. We also assume that $\kappa_n$ is a stochastic sequence and that there exists $\kappa^\dagger$, possibly depending on $J$, such that $\kappa_n \to \kappa^\dagger$ in $\prob_J^\star$-probability for the given $J$.  Finally, the point around which the posterior will concentrate is 
\[ \theta^\dagger = \theta^\star + T^{-1} \Jqv. \]
Note that both $\theta^\dagger$ and $\kappa^\dagger$ are constants with respect to the conditional distribution $\prob_J^\star$.  

\begin{theorem}
\label{thm:bvm}
Consider the pseudo-Bayesian posterior $\Pi_n$ in \eqref{eq:post} based on a prior $\Pi$ and the misspecified model with likelihood in \eqref{eq:lik}.  If the prior density $\pi$ is continuous and positive in a neighborhood of $\theta^\dagger$, then, for $\prob^\star$-almost all $J$, the posterior $\Pi_n$
is asymptotically normal in the sense that 
\[ d\bigl(\Pi_n, \, \nm(\hat\theta_n, 2\kappa^\dagger \theta^{\dagger 2} n^{-1}) \bigr) \to 0 \quad \text{in $\prob_J^\star$-probability as $n \to \infty$}, \]
where $d$ is the total variation distance.  The above conclusion also holds unconditionally, i.e., the above convergence is also in $\prob^\star$-probability.  
\end{theorem}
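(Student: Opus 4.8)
The plan is to adapt the argument behind the misspecified Bernstein--von Mises theorem of \citet{kleijn.vaart.2012}, while exploiting the fact that the misspecified likelihood \eqref{eq:lik} depends on the data only through the single statistic $\hat\theta_n$, which makes a self-contained proof cleaner than checking their abstract hypotheses. Throughout I would fix a $J$ in the $\prob^\star$-probability-one event on which $|\Jind| \vee \Jqv < \infty$ and $\kappa_n \to \kappa^\dagger \in (0,\infty)$ in $\prob_J^\star$-probability, and work conditionally; all limits below are in $\prob_J^\star$-probability. The first ingredient is the behaviour of $\hat\theta_n$. Since, under $\prob_J^\star$, the differences $D_i \sim \nm(\beta^\star\Delta_n + \mu_i, \theta^\star\Delta_n)$ are independent with $\sum_i \mu_i^2 = \Jqv$ and $\sum_i \mu_i$ finite (and, for large $n$, free of $n$), a direct second-moment computation gives $\E_J^\star \hat\theta_n = \theta^\dagger + O(n^{-1})$---the error coming from the negligible drift term $T^{-1}\beta^{\star 2}\Delta_n^2 n$ and jump term $2T^{-1}\beta^\star\Delta_n\sum_i\mu_i$---together with $\V_J^\star \hat\theta_n = O(n^{-1})$, the variance bound using that only the $|\Jind| < \infty$ indices $i \in \Jind$ have $\mu_i \neq 0$. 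Hence $\hat\theta_n \to \theta^\dagger$, so with $\prob_J^\star$-probability tending to one $\hat\theta_n$ eventually lies in a fixed neighbourhood of $\theta^\dagger$ on which $\pi$ is continuous and positive---hence bounded above and bounded below away from $0$---and on which $\kappa_n$ is confined to a compact subinterval of $(0,\infty)$.

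Next I would establish posterior concentration and then localise. Writing $g_n(\theta) = \kappa_n^{-1}\log L_n(\theta) = -\tfrac{n}{2\kappa_n}\{\log\theta + \hat\theta_n/\theta\}$, there is the exact identity $g_n(\theta) - g_n(\hat\theta_n) = -\tfrac{n}{2\kappa_n}\,\psi(\hat\theta_n/\theta)$ with $\psi(x) = x - 1 - \log x$ nonnegative, vanishing only at $x = 1$, and tending to $+\infty$ as $x \downarrow 0$ or $x \uparrow \infty$. Consequently, on $\{|\theta - \hat\theta_n| > \delta\}$ one has $L_n(\theta)^{1/\kappa_n} \le L_n(\hat\theta_n)^{1/\kappa_n}\,e^{-cn}$ for some $c = c(\delta) > 0$, while restricting the normalising integral to $\{|\theta - \hat\theta_n| \le n^{-1/2}\}$, where $g_n(\theta) - g_n(\hat\theta_n) = O(1)$ and $\pi$ is bounded below, yields $\int L_n^{1/\kappa_n}\pi \ge c'n^{-1/2}\,L_n(\hat\theta_n)^{1/\kappa_n}$; dividing gives $\Pi_n(|\theta - \theta^\dagger| > 2\delta) \le C n^{1/2} e^{-cn} \to 0$, so it suffices to analyse $\Pi_n$ near $\hat\theta_n$. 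Reparametrising $\theta = \hat\theta_n + n^{-1/2}t$, the $\Pi_n$-density of $t$ is proportional to $\rho_n(t) = \exp\{g_n(\hat\theta_n + n^{-1/2}t) - g_n(\hat\theta_n)\}\,\pi(\hat\theta_n + n^{-1/2}t)$, and a Taylor expansion of $g_n$ about its maximiser $\hat\theta_n$---using $g_n'(\hat\theta_n) = 0$, $g_n''(\hat\theta_n) = -n/(2\kappa_n\hat\theta_n^2)$, and $g_n''' = O(n)$ uniformly near $\hat\theta_n$---gives $g_n(\hat\theta_n + n^{-1/2}t) - g_n(\hat\theta_n) = -t^2/(4\kappa_n\hat\theta_n^2) + O(n^{-1/2}|t|^3)$. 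Combined with $\pi(\hat\theta_n + n^{-1/2}t)/\pi(\hat\theta_n) \to 1$ and $\kappa_n\hat\theta_n^2 \to \kappa^\dagger\theta^{\dagger 2}$, this gives $\rho_n(t) \to \rho_\infty(t)$ pointwise in $t$, where $\rho_\infty$ is a positive constant multiple of the $\nm(0, 2\kappa^\dagger\theta^{\dagger 2})$ density.

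To upgrade pointwise convergence of the densities to convergence in total variation I would split the $t$-axis into three pieces: on $|t| \le M$ apply bounded convergence; on $M < |t| \le \eps\sqrt n$ use that the convexity of $\psi$ supplies the integrable envelope $\rho_n(t) \le C_0 e^{-c' t^2}$, whose mass beyond $M$ is small uniformly in $n$; and on $|t| > \eps\sqrt n$, that is $|\theta - \hat\theta_n| > \eps$, invoke the exponential concentration bound, which forces $\int_{|t|>\eps\sqrt n}\rho_n \le C n^{1/2}e^{-cn}$. Sending $n \to \infty$ and then $M \to \infty$ gives $\int|\rho_n - \rho_\infty| \to 0$; in particular the normalising constants converge, and, since total variation distance is invariant under the affine change of variables, $d(\Pi_n, \nm(\hat\theta_n, 2\kappa^\dagger\theta^{\dagger 2}n^{-1})) \to 0$ in $\prob_J^\star$-probability for the fixed $J$. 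Finally, because this holds for $\prob^\star$-almost every $J$ and the total variation distance is bounded by $1$, the identity $\prob^\star(\cdot) = \E^\star[\prob_J^\star(\cdot)]$ together with dominated convergence yields the same convergence in $\prob^\star$-probability, which is the unconditional assertion. I expect the step requiring the most care to be the concentration and tail bookkeeping---securing envelopes that are uniform enough over the growing range of $t$ while the temperature $\kappa_n$, being stochastic, possibly $J$-dependent, and appearing as $1/\kappa_n$ in the exponent, is controlled only in probability---but since $L_n$ involves the data solely through $\hat\theta_n$, no empirical-process machinery is needed, and the required bounds are elementary once $\kappa_n$ is confined to a compact subinterval of $(0,\infty)$.
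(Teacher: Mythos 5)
Your proposal is correct, but it takes a genuinely different route from the paper. The paper proves the result by verifying the hypotheses of Theorem~2.1 of \citet{kleijn.vaart.2012}: it establishes the $n^{-1/2}$ posterior concentration rate around $\theta^\dagger$ by bounding the posterior second moment via a Laplace approximation and an exact noncentral chi-square computation of $\E_J^\star(\hat\theta_n-\theta^\dagger)^2$, then checks a local asymptotic normality expansion of $\kappa_n^{-1}\log\{L_n(\theta^\dagger+n^{-1/2}h)/L_n(\theta^\dagger)\}$, and finally invokes the abstract misspecified Bernstein--von Mises theorem; the unconditional claim follows by dominated convergence exactly as you argue. You instead give a self-contained proof that exploits the exact identity $\kappa_n^{-1}\log\{L_n(\theta)/L_n(\hat\theta_n)\}=-\tfrac{n}{2\kappa_n}\psi(\hat\theta_n/\theta)$ with $\psi(x)=x-1-\log x$, obtaining exponential posterior concentration around $\hat\theta_n$ directly, and then running a Laplace/Scheff\'e-type argument in the local coordinate $t=n^{1/2}(\theta-\hat\theta_n)$ with a Gaussian envelope coming from the quadratic lower bound on $\psi$ near $1$. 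What your approach buys is transparency and rigor at the concentration step (the paper's appeal to a generic ``Laplace approximation with uniform $O(n^{-1})$ error'' is the least explicit part of its argument, whereas your $\psi$-based bound is elementary and airtight); what the paper's approach buys is brevity and a template that generalizes to models where the likelihood is not a function of a single sufficient statistic. Two small points to tidy up: the quadratic envelope on $M<|t|\le\eps\sqrt n$ comes from the local bound $\psi(1+u)\ge cu^2$ for bounded $u$ rather than from convexity alone, and the passage from pointwise convergence in probability of $\rho_n$ to $L^1$ convergence should be routed through the high-probability event on which $\kappa_n$ and $\hat\theta_n$ are confined to compacta (or a subsequence argument), as you indicate. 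Neither is a gap in substance.
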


\begin{proof}
See the Appendix.  
\end{proof}

The theorem asserts that, for the ``high-frequency'' setting where $n$ is large, if the data-generating process \eqref{eq:levy} has finite jump activity, then the posterior will resemble a normal distribution centered around $\hat\theta_n$.  Since $\hat\theta_n$ converges to $\theta^\dagger$ (see the proof of the theorem), it follows that the posterior will resemble a normal distribution centered at $\theta^\dagger$.  This is different from the usual Bernstein--von Mises theorems found in the Bayesian literature in that the point around which the posterior concentrates depends on both parameters and a hidden portion of the data, namely, $\Jqv$.  

There are two seemingly undesirable consequences of misspecification.  The first, as alluded to above, is that $\Pi_n$ is biased in the sense that the point around which $\Pi_n$ concentrates is $\theta^\dagger$ instead of the true volatility coefficient $\theta^\star$.  The second is more subtle and concerns the spread of $\Pi_n$.  \citet[][Sec.~1]{kleijn.vaart.2012} point out that the asymptotic variance in their Bernstein--von Mises theorem may not agree with that for $\hat\theta_n$ based on M-estimation theory \citep[e.g.,][Ch.~5]{vaart1998}.  Indeed, the maximum likelihood estimator $\hat\theta_n$ for the misspecified model can be viewed as an M-estimator and will, therefore, be asymptotically normal, with asymptotic variance given by the so-called ``sandwich formula'' which, in this case, gives 
\[ \V_J^\star(\hat\theta_n) = \frac{2\theta^{\dagger 2}}{n} \Bigl\{ 1 - \Bigl(\frac{\Jqv}{T \theta^{\dagger}} \Bigr)^2 \Bigr\} + O(n^{-2}). \]
This follows from the calculations leading up to \eqref{eq:mse.formula} in the Appendix.  Up to order $n^{-1}$,  this closely resembles the asymptotic variance in Theorem~\ref{thm:bvm}; in particular, if $\kappa^\dagger$ were equal to the term in braces above, then the two variance formulas agree.  Note that the genuine Bayes posterior has $\kappa^\dagger = 1$ and, therefore, will have asymptotic variance larger than that in the above display.  Consequently, the Bayesian posterior credible intervals would be too large, making the inference inefficient.  Section~\ref{S:correcting} below describes how we can correct for these two undesirable consequences of misspecification.

\section{Correcting for misspecification}
\label{S:correcting}

As discussed above, there are two effects of the model misspecification on $\Pi_n$, both depending on the quadratic variation of the jump portion of the process.  To deal with these effects, we will need a suitable estimator of the quadratic variation $\Jqv$.  Intuitively, those observed differences $D_i$ which are of relatively large magnitude are likely due to jumps, so a reasonable estimator is 
\begin{equation}
\label{eq:Jhat}
\Jqvhat = \sum_{i=1}^n D_i^2 \, 1(|D_i| > \eta_n), 
\end{equation}
where $\eta_n$ is a sequence that vanishes sufficiently slow, and $1(\cdot)$ is the indicator function.  We claim that if $\eta_n \propto n^{-\omega}$ for some $\omega \in (0,\frac12)$, then 
\begin{equation}
\label{eq:bound2}
\E_J^\star|\Jqvhat - \Jqv| = O(n^{-1/2}), \quad n \to \infty, 
\end{equation}
for $\prob^\star$-almost all $J$.  Results of this type for L\'evy processes are available in the literature \citep[e.g.,][Fact~3.7]{sahalia.jacod.book}, but the proof of \eqref{eq:bound2} given in the Appendix under only the finite jump activity assumption is relatively simple.  With a suitable estimator in hand, now we are ready to address the effects of misspecification.

Towards constructing a modified posterior for the volatility, we must consider the following question: what is the ``correct/optimal'' asymptotic variance in the normal approximation?  Of course, the best possible inference obtains if the model is not misspecified, i.e., there are no jumps; this is equivalent to the case where the sample path of the process is fully observed since, in that case, the jumps are visible and can be removed.  An easy calculations reveals that, in this ideal case, the asymptotic variance is the Cram\'er--Rao bound, $2\theta^{\star 2} n^{-1}$.  This optimal variance obtains in Theorem~\ref{thm:bvm} if 
\[ \kappa^\dagger = \Bigl( \frac{\theta^\star}{\theta^\dagger} \Bigr)^2 = \Bigl( 1 - \frac{\Jqv}{T \theta^\dagger} \Bigr)^2. \]
This suggests we choose $\kappa_n$ in \eqref{eq:post} as 
\begin{equation}
\label{eq:kappa}
\kappa_n = \Bigl( 1 - \frac{\Jqvhat}{T \hat\theta_n} \Bigr)^2, 
\end{equation}
so that, by \eqref{eq:bound2}, $\kappa_n \to \kappa^\dagger$ in $\prob_J^\star$-probability for $\prob^\star$-almost all $J$.  With this understanding, we define a ``modified'' Bayesian posterior $\widetilde \Pi_n$ as the distribution of $\theta - T^{-1} \Jqvhat$ when $\theta$ is distributed as $\Pi_n$ in \eqref{eq:post}, with $\kappa_n$ as in \eqref{eq:kappa}.  In other words, if $\pi_n$ is the density function corresponding to $\Pi_n$, then $\widetilde \Pi_n$ has density function 
\begin{equation}
\label{eq:modified}
\tilde\pi_n(\theta) = \pi_n(\theta + T^{-1}\Jqvhat), \quad \theta \in \RR^+.
\end{equation}
Then we have the following Bernstein--von Mises theorem for $\widetilde \Pi_n$.   

\begin{theorem}
\label{thm:fixed}
Under the same setup as in Theorem~\ref{thm:bvm}, for $\prob^\star$-almost all $J$, the modified Bayesian posterior $\widetilde \Pi_n$, with $\kappa_n$ as in \eqref{eq:kappa}, satisfies  
\[ d\bigl(\widetilde\Pi_n, \, \nm(\hat\theta_n-T^{-1} \Jqvhat, 2\theta^{\star 2} n^{-1}) \bigr) \to 0 \quad \text{in $\prob_J^\star$-probability as $n \to \infty$}, \]
where $d$ is the total variation distance.  The above convergence is also in $\prob^\star$-probability.  
\end{theorem}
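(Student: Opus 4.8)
The plan is to derive Theorem~\ref{thm:fixed} as a quick corollary of Theorem~\ref{thm:bvm}, using two observations: the data-driven temperature $\kappa_n$ in \eqref{eq:kappa} satisfies the hypotheses of Theorem~\ref{thm:bvm} with limiting constant $\kappa^\dagger = (\theta^\star/\theta^\dagger)^2$, and $\widetilde\Pi_n$ is, by construction, a deterministic translate of $\Pi_n$, so the relevant total variation distances coincide.

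First I would identify the limit of $\kappa_n$. Note that $\theta^\dagger = \theta^\star + T^{-1}\Jqv \ge \theta^\star > 0$ is bounded away from $0$, and recall from the proof of Theorem~\ref{thm:bvm} that $\hat\theta_n \to \theta^\dagger$ in $\prob_J^\star$-probability. Combining this with $\Jqvhat \to \Jqv$ in $\prob_J^\star$-probability, which is immediate from \eqref{eq:bound2}, the continuous mapping theorem gives, for $\prob^\star$-almost all $J$,
\[ \kappa_n = \Bigl( 1 - \frac{\Jqvhat}{T\hat\theta_n} \Bigr)^2 \;\longrightarrow\; \Bigl( 1 - \frac{\Jqv}{T\theta^\dagger} \Bigr)^2 = \Bigl( \frac{\theta^\star}{\theta^\dagger} \Bigr)^2 =: \kappa^\dagger \quad \text{in $\prob_J^\star$-probability,} \]
using $T\theta^\dagger - \Jqv = T\theta^\star$. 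Since $\kappa^\dagger \in (0,1]$, with $\prob_J^\star$-probability tending to one $\kappa_n$ is bounded away from $0$, so $\Pi_n$ in \eqref{eq:post} is a proper distribution and Theorem~\ref{thm:bvm} applies. It then yields $d\bigl(\Pi_n, \nm(\hat\theta_n, 2\kappa^\dagger\theta^{\dagger 2}n^{-1})\bigr) \to 0$ in $\prob_J^\star$-probability, and since $\kappa^\dagger\theta^{\dagger 2} = \theta^{\star 2}$ --- the very reason \eqref{eq:kappa} was chosen --- this simplifies to $d\bigl(\Pi_n, \nm(\hat\theta_n, 2\theta^{\star 2}n^{-1})\bigr) \to 0$ in $\prob_J^\star$-probability.

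Next I would transfer this through the location shift. By definition $\widetilde\Pi_n$ is the pushforward of $\Pi_n$ under $x \mapsto x - T^{-1}\Jqvhat$ (equivalently, it has density $\tilde\pi_n$ from \eqref{eq:modified}), and $\nm(\hat\theta_n - T^{-1}\Jqvhat, 2\theta^{\star 2}n^{-1})$ is the pushforward of $\nm(\hat\theta_n, 2\theta^{\star 2}n^{-1})$ under the same map; total variation is invariant under a common measurable bijection, so
\[ d\bigl(\widetilde\Pi_n, \nm(\hat\theta_n - T^{-1}\Jqvhat, 2\theta^{\star 2}n^{-1})\bigr) = d\bigl(\Pi_n, \nm(\hat\theta_n, 2\theta^{\star 2}n^{-1})\bigr) \to 0 \]
in $\prob_J^\star$-probability, which is the conditional claim. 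The unconditional claim follows exactly as in Theorem~\ref{thm:bvm}: the displayed distance is bounded by $1$, so integrating the $\prob^\star$-almost-sure (in $J$) convergence in $\prob_J^\star$-probability over $J$ and appealing to the bounded convergence theorem upgrades it to convergence in $\prob^\star$-probability.

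This argument carries no serious obstacle of its own; essentially all of the analytic work already sits inside Theorem~\ref{thm:bvm} and the estimation bound \eqref{eq:bound2}. The one point deserving care is the first step --- confirming that feeding the threshold estimator $\Jqvhat$ and the maximum likelihood estimator $\hat\theta_n$ into \eqref{eq:kappa} yields a temperature converging to precisely $(\theta^\star/\theta^\dagger)^2$, the value identified before \eqref{eq:kappa} as the one deflating the inflated variance $2\kappa^\dagger\theta^{\dagger 2}n^{-1}$ of Theorem~\ref{thm:bvm} to the no-jumps Cram\'er--Rao bound $2\theta^{\star 2}n^{-1}$ --- and that this limit is strictly positive, so the Gibbs posterior stays well-defined along the way.
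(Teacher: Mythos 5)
Your proposal is correct and follows essentially the same route as the paper's proof: transfer Theorem~\ref{thm:bvm} through the location shift via translation-invariance of total variation, using that the choice \eqref{eq:kappa} forces $\kappa^\dagger\theta^{\dagger 2}=\theta^{\star 2}$, and upgrade to $\prob^\star$-convergence by bounded convergence. The only difference is that you spell out the verification $\kappa_n\to(\theta^\star/\theta^\dagger)^2$ in $\prob_J^\star$-probability, which the paper records in the discussion preceding the theorem rather than inside the proof.
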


\begin{proof}
Since the total variation distance is invariant to location shifts, we have that 
\[ d\bigl(\widetilde\Pi_n, \, \nm(\hat\theta_n-T^{-1} \Jqvhat, 2\theta^{\star 2} n^{-1}) \bigr) = d\bigl(\Pi_n, \, \nm(\hat\theta_n, 2\theta^{\star 2} n^{-1}) \bigr). \]
That the right-hand side converges to 0 in $\prob_J^\star$-probability follows from Theorem~\ref{thm:bvm} and the particular choice of $\kappa_n$ in \eqref{eq:kappa}.  The $\prob^\star$ convergence is proved just like in Theorem~\ref{thm:bvm}.  
\end{proof}

The first observation is that, since $\hat\theta_n - T^{-1} \Jqvhat$ is a consistent estimator of $\theta^\star$, the modified posterior is concentrating around the true volatility coefficient, as desired.  Furthermore, by our choice of the sequence $\kappa_n$, the asymptotic variance agrees with that achieved in the ideal case where there are no jumps present or, equivalently, when the sample path of the process is fully observed.  The remaining question is if the posterior variance agrees with the variance of the center $\hat\theta_n - T^{-1} \Jqvhat$ under $\prob^\star$.  Proposition~1 in \citet{sahalia.jacod.2010} reveals that, in the present setting, under $\prob_J^\star$, the estimator $\hat\theta_n - T^{-1}\Jqvhat$ satisfies a central limit theorem, with asymptotic variance $2\theta^{\star 2} n^{-1}$.  Therefore, the credible intervals coming from the modified posterior $\widetilde \Pi_n$ will be asymptotically valid under $\prob_J^\star$, i.e., the coverage probability of the $100(1-\alpha)$\% credible intervals will converge to $1-\alpha$ for $\prob^\star$-almost all $J$.  It follows immediately from the dominated convergence theorem that the coverage probability converges to $1-\alpha$ under $\prob^\star$ as well.  It turns out that the finite-sample performance depends on the choice of threshold $\eta_n$ in \eqref{eq:Jhat} and we address this in Section~\ref{S:numerical}.

\section{Numerical results}
\label{S:numerical}

An important question is how to choose the threshold $\eta_n$.  The theory says that we need $\eta_n = m n^{-\omega}$ for some $m > 0$ and some $\omega \in (0, \frac12)$ but, in finite samples, $m$ and $\omega$ are not independent parameters; that is, only the value of $\eta_n$ matters, not the particular $(m,\omega)$.  This point is discussed at length in \citet[][Sec.~6.2.2]{sahalia.jacod.book}, and they suggest one reasonable strategy for choosing $\eta_n$.  We consider here a simpler approach based on outlier detection.  That is, let $Q$ denote the interquartile range of the observed increment magnitudes $|D_1|,\ldots,|D_n|$; this value is likely to be small since almost all of the increments correspond to the diffusion part of the process.  Take $\eta_n$ to be some value that lower-bounds the set of all $|D_i|$s that exceeds some cutoff, say, $5Q$.  We make no claims that this approach is ``optimal'' in any sense, only that it is both simple and reasonable.  

For illustration, consider the model \eqref{eq:levy} with drift $\beta^\star = 1$, volatility $\theta^\star = 10$, and compound Poisson process jumps with a rate of $\lambda=5$ jumps per unit time and jumps sampled from the discrete uniform distribution on $\{-\tau, \tau\}$ with $\tau = 3$.  We simulate $n=5000$ equally spaced observations from this process.  A plot of the observed sample path on the interval $[0,1]$ is shown in Figure~\ref{fig:example}(a) with the jump times highlighted by vertical lines.  For the misspecified Bayes model, we consider a conjugate inverse gamma prior with shape $a=1$ and rate $b=1$; the presence of the temperature parameter $\kappa_n$ does not affect conjugacy.  We also fix $\eta_n$ based on the interquartile range strategy described above.  Figure~\ref{fig:example}(b) shows the posterior density function \eqref{eq:modified}, the corresponding 95\% credible interval for the volatility, and the density function of the normal approximation in Theorem~\ref{thm:fixed}.  The key observations here are that modified posterior density is centered very close to the true volatility in this case, and hence the credible interval contains it, and also that the posterior density and the normal approximation are very similar.  Since the the normal approximation has variance equal to the Cram\'er--Rao lower bound under the benchmark no-jumps model, the plot Figure~\ref{fig:example}(b) reveals the overall efficiency of the modified posterior inference.     

\begin{figure}
\begin{center}
\subfigure[Sample path]{\scalebox{0.6}{\includegraphics{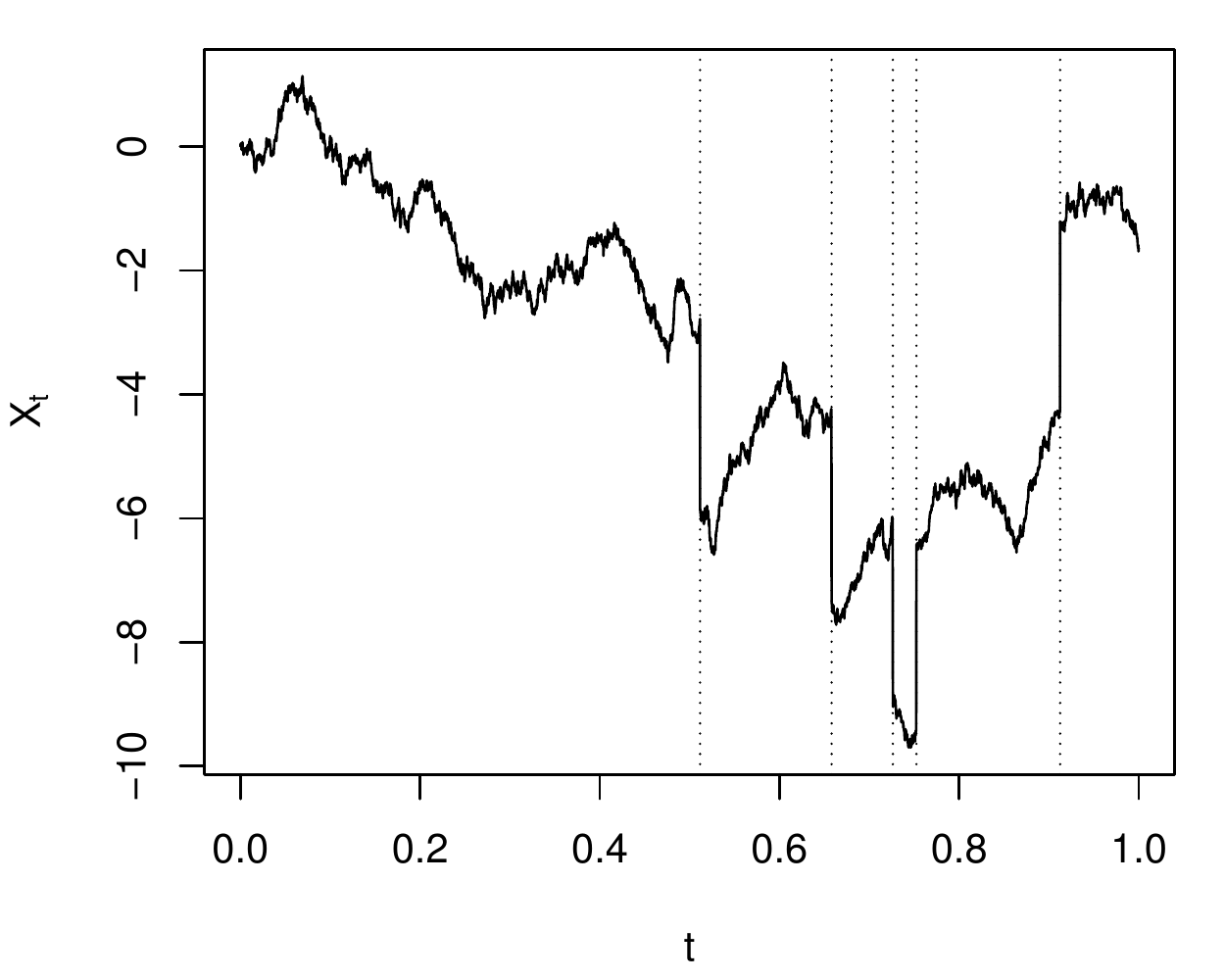}}}
\subfigure[Modified posterior density]{\scalebox{0.6}{\includegraphics{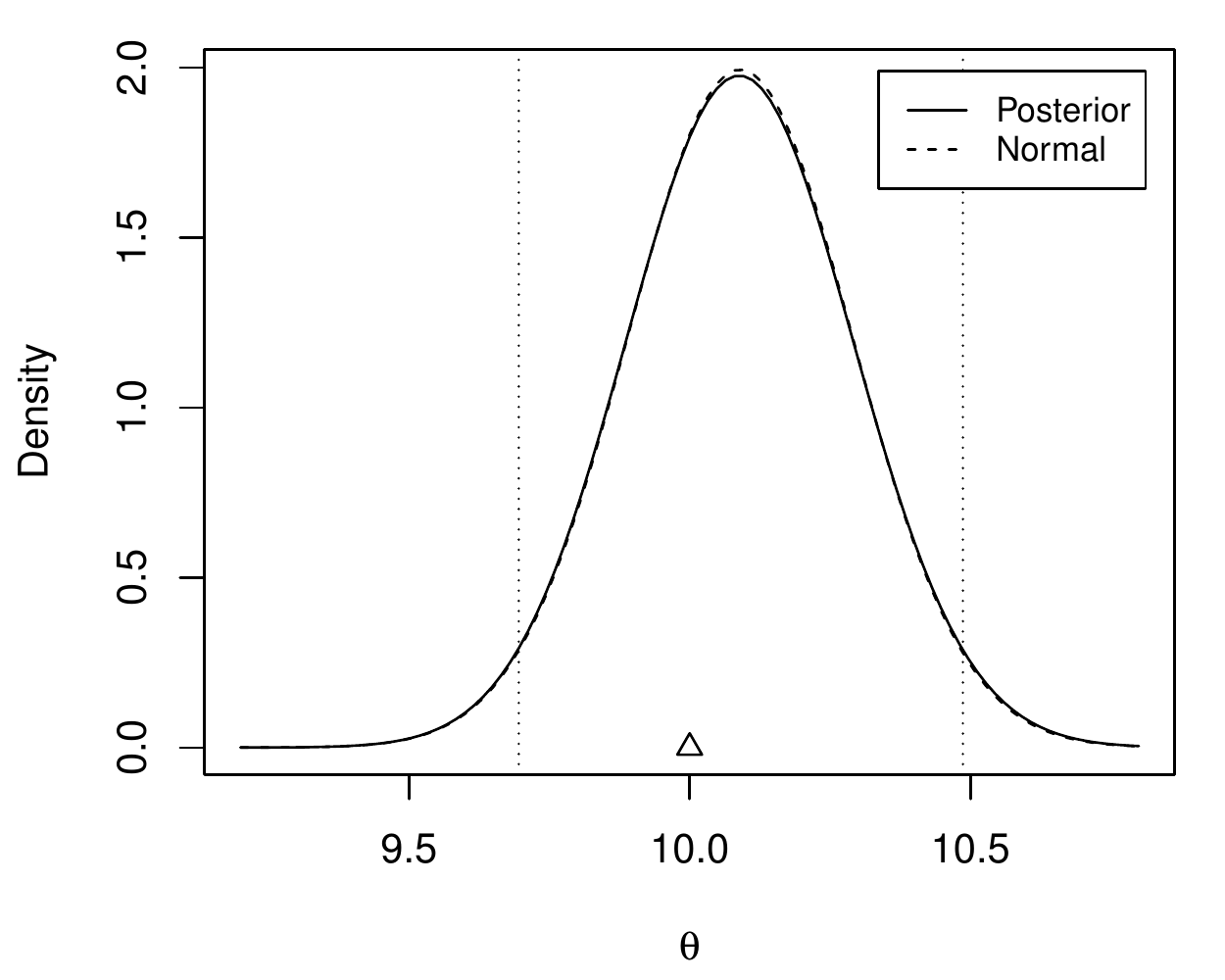}}}
\end{center}
\caption{Plot of one sample path $X$ (left) along with the corresponding density function for the modified posterior (right); also shown in the right panel is the 95\% modified posterior credible interval (vertical lines), the normal approximation in Theorem~\ref{thm:fixed}, and the true volatility coefficient ($\triangle$).}
\label{fig:example}
\end{figure}

To further investigate the finite-sample properties our proposed approach for inference on the volatility, we consider a simulation study.  Using the same model as above, but varying the jump rate $\lambda \in \{4, 8, 16, 32\}$, the jump magnitude $\tau \in \{1,2,4,8\}$, and the sample size $n$, we investigate the coverage probability of the modified pseudo-Bayesian credible intervals, based on the choice of threshold mentioned above.  Figure~\ref{fig:sim} displays the empirical coverage probability, based on 5000 Monte Carlo simulations in each setting, summarized over the jump rate and standard deviation, for several values of $n$.  This plot reveals that the choice of threshold based on the interquartile range performs reasonably well in this setting, giving coverage probabilities very close to the nominal 95\% level.

\begin{figure}[t]
\begin{center}
\scalebox{0.6}{\includegraphics{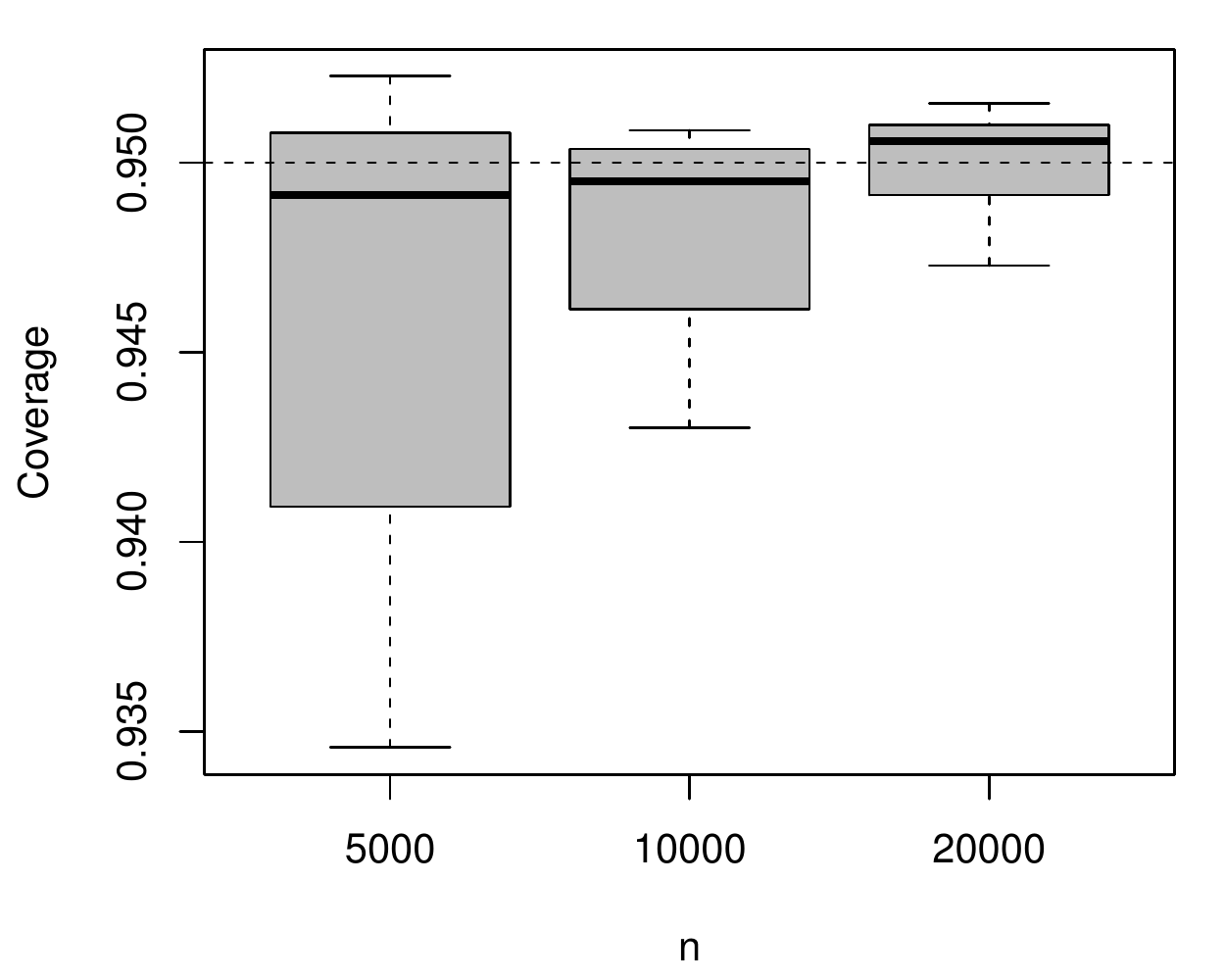}}
\end{center}
\caption{Summary of the empirical coverage probability of the 95\% modified posterior credible intervals, over various several values of the jump rate and standard deviation (see the text), and for several values of $n$.}
\label{fig:sim}
\end{figure}

\section{Concluding remarks}
\label{S:discuss}

In this paper we considered the construction of a (Bayesian-like) posterior for inference on the volatility coefficient in a jump diffusion model where the distribution for the jump part of the process is left unspecified.  By working with a ``purposely misspecified'' model we avoid the difficulties of modeling the jump part, as well as the corresponding computations, but the cost is some misspecification bias.  We correct for this bias by making two modifications: a non-trivial scaling of the likelihood, followed by a location shift.  We prove that this modified posterior is asymptotically normal and optimal in the sense that the asymptotic variance agrees with the Cram\'er--Rao lower bound under the no-jumps model.  Aside from the asymptotic results, these modifications are easy to implement and, as shown in Section~\ref{S:numerical}, gives valid inference in a range of examples.  


Of course, this ``misspecification on purpose'' strategy could be used in many other problems to provide valid uncertainty quantification for the parameters of interest without having to specify a complete model for the possibly very complex nuisance parameters, which is very attractive.  Furthermore, when one has reliable prior information about the interest parameter, it is straightforward to incorporate this into the proposed analysis compared to a fully non-Bayesian approach, say, using M-estimation with bootstrap.  

For this particular model, there are several extensions that one could consider.  For example, if the drift parameter was also of interest, then, instead of ignoring $\beta$ as we did here, it would be relatively straightforward to construct the same modified posterior for the pair $(\beta,\theta)$.  More interesting is the case where the volatility parameter is not a scalar constant but, instead, a function $\theta_t$.  Certain functionals of $\theta_t$, in particular, the average volatility $T^{-1} \int_0^T \theta_t \,dt$, can be inferred directly using virtually the same techniques as presented here.  Inference on the function $t \mapsto \theta_t$ itself would be more involved, but the analysis would be similar.  Application of the ``misspecification on purpose'' strategy in the case where the volatility is a function depending on the sample path itself, i.e., $\theta=\theta(X_t)$, is an interesting open problem.



\appendix

\section{Proofs}

\subsection{Proof of Theorem~\ref{thm:bvm}}

Without loss of generality, we will assume in the proof that $T=1$ and $\Delta_n = n^{-1}$.  To prove the Bernstein--von Mises theorem, follow the approach described in Theorem~2.1 of \citet{kleijn.vaart.2012}.  Our first objective is to show that 
\[ \E_J^\star \bigl[ \Pi_n(\{\theta: |\theta - \theta^\dagger| > M_n n^{-1/2}\}) \bigr] \to 0, \quad n \to \infty, \]
for any sequence of constants $M_n \to \infty$.  To establish this, we need only to study the posterior mean and variance.  That is, if $\E_{\Pi_n}$ denotes expectation with respect to the posterior $\Pi_n$, then, by Markov's inequality, we have 
\begin{equation}
\label{eq:post.bound}
\Pi_n(\{\theta: |\theta-\theta^\dagger| > M_n n^{-1/2}\}) \leq n M_n^{-2} \E_{\Pi_n}(\theta - \theta^\dagger)^2.
\end{equation}
To show that the expectation of the left-hand side in the above display vanishes, it suffices to show that  
\begin{equation}
\label{eq:moment.bounds}
\E_J^\star\{\E_{\Pi_n}(\theta - \theta^\dagger)^2\} = O(n^{-1}). 
\end{equation}
Towards this, we will use the Laplace approximation which says that, for suitable functions $g$, the posterior mean of $g(\theta)$ is 
\[ g(\hat\theta_n) \{1 + O(n^{-1})\}, \quad n \to \infty. \]
Therefore, in our case, if we apply the above to $g(\theta)=(\theta - \theta^\dagger)^2$, then we have 
\[ \E_{\Pi_n}(\theta - \theta^\dagger)^2 = (\hat\theta - \theta^\dagger)^2\{1 + O(n^{-1})\}. \]
Since the log-likelihood function for our misspecified model has a unique maximum $\hat\theta_n$ in the interior of the $\theta$-space and satisfies $(\log L_n)''(\hat\theta_n) < 0$, the big-oh term above is uniform in observations, i.e., the $O(n^{-1})$ term in the above display is a function of $(D_1,\ldots,D_n)$ that can be uniformly bounded by a constant times $n^{-1}$ and, in particular, the scaling by $\kappa_n^{-1}$ does not affect this conclusion.   Therefore, to get \eqref{eq:moment.bounds}, it suffices to show that
\begin{equation}
\label{eq:mse1}
\E_J^\star(\hat\theta_n - \theta^\dagger)^2 = O(n^{-1})
\end{equation}
Towards showing \eqref{eq:mse1}, we recall that $\hat\theta_n = \sum_{i=1}^n D_i^2$ where, under $\prob_J^\star$, $(D_1,\ldots,D_n)$ are independent with 
\[ D_i \sim \nm(\beta^\star \Delta_n + \mu_i, \theta^\star \Delta), \quad i=1,\ldots,n, \]
and $\mu_i$ are defined in \eqref{eq:mean}.  It follows that $\hat\theta_n$ has the same distribution as $\theta^\star \Delta_n$ times a non-central chi-square random variable, $Y$, with $n$ degrees of freedom and non-centrality parameter $\lambda = (\theta^\star \Delta_n)^{-1} \sum_{i=1}^n (\beta^\star \Delta_n + \mu_i)^2$.  In particular, 
\[ \E(Y) = n + \lambda \quad \text{and} \quad \V(Y) = 2(n + 2\lambda). \]
If we let $\V_J^\star$ denote variance with respect to $\prob_J^\star$, then we have that 
\begin{align*}
\E_J^\star(\hat\theta_n - \theta^\dagger)^2 & = \V_J^\star(\hat\theta_n) + \{\E_J^\star(\hat\theta_n) - \theta^\dagger\}^2 \\
& = (\theta^\star \Delta_n)^2 \V(Y) + \{(\theta^\star \Delta_n) \E(Y) - \theta^\dagger\}^2. 
\end{align*}
Plugging in the formulas for the mean and variance of $Y$ and simplifying, gives 
\begin{equation}
\label{eq:mse.formula}
\E_J^\star(\hat\theta_n - \theta^\dagger)^2 = 2\theta^\star \theta^\dagger n^{-1} + O(n^{-2}). 
\end{equation}
This is clearly $O(n^{-1})$, so we have established \eqref{eq:mse1}.  Note that this derivation depends on $J$ only through $|\Jind|$ and $\Jqv$.  Since \eqref{eq:mse1} implies \eqref{eq:moment.bounds}, we have proved the claimed posterior concentration rate result.  

Next, we need to demonstrate that the model is suitably regular.  More specifically, Theorem~2.1 in \citet{kleijn.vaart.2012} require that the model satisfies a certain local asymptotic normality property, i.e., the log-likelihood ratio has a quadratic approximation locally around the specified $\theta^\dagger$.  Since the misspecified model is so nice, it is a straightforward exercise to show that 
\[ \Bigl| \frac{1}{\kappa_n} \log \frac{L_n(\theta^\dagger + n^{-1/2} h)}{L_n(\theta^\dagger)} 
- V_{\theta^\dagger} \, Z_n(\theta^\dagger) \, h + \tfrac12 V_{\theta^\dagger} \, h^2 \Bigr| = o_{\prob_J^\star}(1), \]
where $V_{\theta^\dagger} = (2\kappa^\dagger \theta^{\dagger 2})^{-1}$ and $Z_n(\theta^\dagger) = n^{1/2}(\hat\theta_n - \theta^\dagger)$.  The above display holds uniformly on compact subsets of $h$, and it follows from \eqref{eq:mse1} that $Z_n(\theta^\dagger)$ is bounded in $\prob_J^\star$-probability.  Therefore, the assertion in Theorem~\ref{thm:bvm} follows from Kleijn and van der Vaart's.  

The extension of these results to the unconditional distribution, $\prob^\star$, is also straightforward.  Based on the finite jump activity assumption, all that we demonstrated above holds with $\prob^\star$-probability 1.  In particular, we have that, for any $\eps > 0$, 
\[ \prob_J^\star\bigl\{ d\bigl(\Pi_n, \, \nm(\hat\theta_n, 2\kappa^\dagger \theta^{\dagger 2} n^{-1}) > \eps \bigr\} \to 0, \quad \text{for $\prob^\star$-almost all $J$}. \]
Since this sequence is bounded and converges almost surely, it follows from the dominated convergence theorem that 
\[ \prob^\star\bigl\{ d\bigl(\Pi_n, \, \nm(\hat\theta_n, 2\kappa^\dagger \theta^{\dagger 2} n^{-1}) > \eps \bigr\} = \E^\star \bigl[ \prob_J^\star\bigl\{ d\bigl(\Pi_n, \, \nm(\hat\theta_n, 2\kappa^\dagger \theta^{\dagger 2} n^{-1}) > \eps \bigr\} \bigr] \to 0, \]
i.e., $d\bigl(\Pi_n, \, \nm(\hat\theta_n, 2\kappa^\dagger \theta^{\dagger 2} n^{-1}) \to 0$ in $\prob^\star$-probability.

\subsection{Proof of the claim in Equation \eqref{eq:bound2}}

For $\prob^\star$-almost all $J$, we have that $(D_1,\ldots,D_n)$ are independent under $\prob_J^\star$ with 
\[ D_i = \mu_i + Z_i \sim \nm(\beta^\star \Delta_n + \mu_i, \theta^\star \Delta_n), \quad i=1,\ldots,n \]
where $\mu_i$ are given in \eqref{eq:mean} and $\mu_i \neq 0$ only for those indices $i \in \Jind$.  To prove the claim in Equation \eqref{eq:bound2}, we split the indices to those that contain a jump (in $\Jind$) and those that do not (in $\Jind^c$).  Then we get 
\begin{align*}
\Jqvhat - \Jqv & = \sum_{i=1}^n D_i^2 \, 1(|D_i| > \eta_n) - \Jqv \\
& = \sum_{i \not\in \Jind} Z_i^2 \, 1(|Z_i| > \eta_n) + \sum_{i \in \Jind} Z_i^2 \, 1(|Z_i + J_i| > \eta_n) \\
& \qquad + 2 \sum_{i \in \Jind} \mu_i Z_i \, 1(|Z_i + \mu_i| > \eta_n) + \sum_{i \in \Jind} \mu_i^2 \, 1(|Z_i + \mu_i| \leq \eta_n).
\end{align*}
Take absolute value of both sides, apply the triangle inequality, and then take expectation.  This yields the inequality 
\begin{align*}
\E_J^\star|\Jqvhat - \Jqv| & \leq \sum_{i \not\in \Jind} \E_J^\star\{ Z_i^2 \, 1(|Z_i| > \eta_n) \} + \sum_{i \in \Jind} \E_J^\star\{Z_i^2 \, 1(|Z_i + \mu_i| > \eta_n)\} \\
& \qquad + 2 \sum_{i \in \Jind} \E_J^\star\{|\mu_i Z_i| \, 1(|Z_i + \mu_i| > \eta_n)\} + \sum_{i \in \Jind} \E_J^\star\{ \mu_i^2 \, 1(|Z_i + \mu_i| \leq \eta_n)\}.
\end{align*}
We will proceed by showing, one by one, that each of the four terms in the upper bound above is $O(n^{-1/2})$.  First, note that 
\[ Z_i \sim \nm(\beta^\star \Delta_n, \theta^\star \Delta_n), \quad i=1,\ldots,n, \] 
are iid and hence its fourth moment is bounded by a constant independent of $n$.  Then we have, by the Cauchy--Schwartz inequality
\begin{align*}
\sum_{i \not\in \Jind} \E_J^\star\{Z_i^2 \, 1(|Z_i| > \eta_n)\} & \lesssim (\E_J^\star|Z_1|^4)^{1/2} \sum_{i \not\in \Jind} (\prob(|Z_i| > \eta_n))^{1/2} \\
& \lesssim \,|\{J\}^c| \, \prob(|Z_i| > \eta_n)^{1/2}.
\end{align*}
It is clear that $|\Jind^c|$ is of order $n$.  So we only need to find a good bound for the tail probability.  Assume, for the moment, that $\beta^\star > 0$.  Using the usual normal tail probability bounds, we get 
\[ \prob(|Z_i| > \eta_n) \lesssim \frac{\Delta_n^{1/2}}{\eta_n-\beta^\star \Delta_n} e^{-(\eta_n - \beta^\star \Delta_n)^2 / \Delta_n} \lesssim \frac{\Delta_n^{1/2}}{\eta_n} e^{-\eta_n^2/\Delta_n}. \]
So, if $\eta_n \propto n^{-\omega}$, for $\omega \in (0,\frac12)$, then the upper bound for the above tail probability is $o(n^{-k})$ for any positive integer $k$. Hence it follows easily 
\[ \sum_{i \not\in \Jind} \E_J^\star\{Z_i^2 \, 1(|Z_i| > \eta_n)\} = o(n^{-1/2}); \]
the same conclusion can be reached if $\beta^\star < 0$.  Next, 
\[ \sum_{i \in \Jind} \E_J^\star\{Z_i^2 \, 1(|Z_i + \mu_i| > \eta_n)\} \leq |\Jind| \{\theta^\star \Delta_n + (\beta^\star \Delta_n)^2\} = O(n^{-1}) \]
and, similarly, using Cauchy--Schwartz, 
\[ \sum_{i \in \Jind} \E_J^\star\{|\mu_i Z_i| \, 1(|Z_i + \mu_i| > \eta_n)\} \lesssim \{\theta^\star \Delta_n + (\beta^\star \Delta_n)^2\}^{1/2} \sum_{i \in \Jind} |\mu_i| = O(n^{-1/2}). \]
For the last term, we need to bound $\prob_J^\star(|Z_i + \mu_i| \leq \eta_n)$.  Again, without loss of generality, if we assume that $\beta^\star > 0$ and $\mu_i > 0$, then we get
\[ \prob_J^\star(|Z_i + \mu_i| \leq \eta_n) \leq \prob\{ \nm(0,1) > \Delta_n^{-1/2}(\mu_i + \beta^\star \Delta_n - \eta_n) \}, \]
which, using the normal tail probability bound again, is bounded by 
\[ \frac{\Delta_n^{1/2}}{\mu_i + \beta^\star \Delta_n - \eta_n} e^{-(\mu_i + \beta^\star \Delta_n - \eta_n)^2/\Delta_n}. \]
Since $\mu_i > 0$ is a fixed constant, the above quantity vanishes exponentially fast, so 
\[ \sum_{i \in \Jind} \mu_i^2 \, \prob(|Z_i + \mu_i| \leq \eta_n) \leq o(n^{-1/2}) \sum_{i \in \Jind} \mu_i^2. \]
All four terms have been shown to be $O(n^{-1/2})$, completing the proof of \eqref{eq:bound2}.  Finally, note that the result holds for all $J$ such that $|\Jind|$ and $\Jqv = \sum_{i \in \Jind} \mu_i^2$ are finite.  Since this is a $\prob^\star$-probability~1 event, \eqref{eq:bound2} holds for $\prob^\star$-almost all $J$.

\ifthenelse{1=1}{}{
\subsection{Proof of Theorem~\ref{thm:fixed}}

For the first assertion, on the concentration rate for the modified posterior $\widetilde \Pi_n$, by Markov and triangle inequalities, we get 
\begin{align*}
\widetilde \Pi_n(\{\theta: |\theta-\theta^\star| > M_n n^{-1/2}\}) & = \Pi_n(\{\theta: |\theta - \theta^\star - \Jqvhat| > M_n n^{-1/2} \}) \\
& \leq n^{1/2} M_n^{-1} \, \E_{\Pi_n}|\theta - \theta^\star - \Jqvhat| \\
& = n^{1/2} M_n^{-} \, \E_{\Pi_n}|\theta - \theta^\star - \Jqv + \Jqv - \Jqvhat| \\
& \leq n^{1/2} M_n^{-1} \, \bigl\{ \E_{\Pi_n}|\theta-\theta^\star - \Jqv| + |\Jqvhat - \Jqv| \bigr\}.
\end{align*}
By Cauchy--Schwartz and \eqref{eq:mse1}, we have that 
\[ \E_J^\star\{\E_{\Pi_n}|\theta-\theta^\star - \Jqv|\} = O(n^{-1/2}). \]
Furthermore, the expectation of the second term is $O(n^{-1/2})$ by \eqref{eq:bound2}.  Putting everything together, we have 
\[ \E_J^\star\bigl[ \widetilde \Pi_n(\{\theta: |\theta-\theta^\star| > M_n n^{-1/2}\}) \bigr] \to 0, \]
which is exactly the first assertion of the theorem.

The second assertion follows directly from the result of Theorem~\ref{thm:bvm} and the fact that the total variation distance is invariant to location shifts.  That is, 
\[ d_{\text{\sc tv}}\bigl(\widetilde \Pi_n, \nm(\hat\theta - \Jqvhat, n^{-1} V_{\theta^\dagger}^{-1}) \bigr) = d_{\text{\sc tv}}\bigl(\Pi_n, \nm(\hat\theta, n^{-1} V_{\theta^\dagger}^{-1}) \bigr), \]
and the right-hand side converges in $\prob_J^\star$-probability by Theorem~\ref{thm:bvm}.  

{\color{red} Last part about extending to the unconditional distribution $\prob^\star$...}
}

\ifthenelse{1=0}{
\bibliographystyle{apalike}
\bibliography{/Users/rgmarti3/Dropbox/Research/mybib}
}{

}

\end{document}